\documentclass[12pt]{article}

\usepackage[a4paper,top=2.5cm,bottom=2.5cm,left=2.5cm,right=2.5cm,marginparwidth=1.75cm]{geometry}
\usepackage{lineno}
\usepackage{natbib}
\usepackage{amssymb}
\usepackage{bm}
\usepackage{authblk}
\usepackage{amsmath}
\usepackage{amsthm}
\usepackage{epsfig}
\usepackage{graphicx}
\usepackage{graphics}
\usepackage{float}
\usepackage{subfigure}
\usepackage{multirow}
\usepackage{color}
\usepackage{fullpage}
\usepackage[normalem]{ulem} 
\usepackage{makeidx}
\usepackage{xspace}
\usepackage{wrapfig}
\usepackage{setspace}
\usepackage{kotex}
\usepackage{url}
\makeindex

\newtheorem{theorem}{Theorem}

\newcommand{\xb}{{\bm{x}}}
\newcommand{\Xb}{{\bm{X}}}

\newcommand{\wb}{{\bm{w}}}

\newcommand{\betab}{{\bm{\beta}}}

\newcommand{\alphab}{{\bm{\alpha}}}

\providecommand{\keywords}[1]
{
  \small	
  \textbf{\textit{Keywords: }} #1
}

\begin{document}

\title{Mixture of partially linear experts}
 
\author[1]{Yeongsan Hwang}
\author[1]{Byungtae Seo}
\author[2*]{Sangkon Oh}
\affil[1]{Department of Statistics, Sungkyunkwan University}
\affil[2*]{Department of Statistics, Ewha Womans Univerisity}
\affil[*]{Corresponding author(s). E-mail(s): sangkonoh92@gmail.com}

\date{}

\maketitle

\begin{abstract}
In the mixture of experts model, a common assumption is the linearity between a response variable and covariates. While this assumption has theoretical and computational benefits, it may lead to suboptimal estimates by overlooking potential nonlinear relationships among the variables. To address this limitation, we propose a partially linear structure that incorporates unspecified functions to capture nonlinear relationships. We establish the identifiability of the proposed model under mild conditions and introduce a practical estimation algorithm. We present the performance of our approach through numerical studies, including simulations and real data analysis.

\end{abstract}
\keywords{Machine learning, Mixture of experts, Model-based clustering, Partially linear models}

%%%%%%%%%%%%%%%%%%%%%%%%%%%%%%%%%%
\section{Introduction}
%%%%%%%%%%%%%%%%%%%%%%%%%%%%%%%%%%%%%%%%%%%%%%%%%%%%%%%%%%%%%%%%%%%%%%%%%%%%%%%%%%%%%%%%%%%%%%%%%%%%%%%%%%%%%%%%%%%%%%%%%%%%%%%%%%

\cite{quandt1972new} introduced a finite mixture of regressions (FMR) for uncovering hidden latent structures in data. It assumes the existence of unobserved subgroups, each characterized by distinct regression coefficients. Since the introduction of FMR, extensive research has been conducted to enhance its performance, with contributions from \citet{Neykov_2007}, \citet{bai2012robust}, \citet{bashir2012robust}, \citet{Hunter_2012}, \citet{Yao_2014}, \citet{song2014robust}, \citet{Zeller_2015}, \citet{zeller2019finite}, \citet{ma2021semiparametric}, \citet{zarei2022robust}, and \citet{oh2024semiparametric}. 

However, because FMRs assume that the assignment of each data point to clusters is independent of the covariates \citep{hennig2000identifiablity}, FMR can be undermined with regard to the performance of regression clustering when the assumption of assignment independence is violated.
Alternatively, \cite{jacobs1991adaptive} introduced the mixture of linear experts (MoE), allowing for the assignment of each data point to depend on the covariates. \cite{nguyen2016laplace} suggested the Laplace distribution for the error distributions, while \cite{chamroukhi2016robust} and \cite{chamroukhi2017skew} used $t$ distributions and skew-$t$ distributions for errors, respectively. 
\cite{murphy2019gaussian} further extended MoE with a parsimonious structure to improve estimation efficiency. \citet{mirfarah2021mixture} introduced the use of scale mixture of normal distributions for errors within MoE. Recently,  \citet{oh2022merging} proposed a specific MoE variant, assuming that covariates follow finite Gaussian location-scale mixture distributions and that the response follows finite Gaussian scale mixture distributions.

In spite of extra flexibility for errors in these models, they assumed linear structures in each mixture component, which makes too simple to capture the hidden latent structures. In homogeneous population, \cite{engle1986semiparametric} introduced a partial linear model, comprising  a response variable $Y$ is represented as a linear combination of specific $p$-dimensional covariates $\Xb$ and an unspecified non-parametric function that includes an additional covariate $U$, as follows.
\begin{align}
\label{PLM}
      y = \boldsymbol{x}^{\top}\boldsymbol{\beta} + g({u}) + \epsilon, 
\end{align}
where $U \subset \mathbb{R}$, $\epsilon$ is an error term with a mean zero and finite variance, and the function $g(\cdot)$ is an unknown non-parametric function. 
This model has the advantages of interpretability, stemming from its linearity, with the flexibility to capture diverse functional relationships through an unspecified function $g(\cdot)$. The differentiation between $\Xb$ and U is determined either theoretically based on established knowledge in the application field or through methods like scatter plots or statistical hypothesis testing. 
\cite{wu2017estimation} and \cite{skhosana2023novel} suggested the FMR to accommodate a partially  linear structure within a heterogeneous population. 

In this paper, we consider a novel approach that incorporates partially linear structures into MoE, utilizing unspecified functions based on kernel methods. This allows proposed model to effectively capture various relationships between the response and covariates, while latent variable is dependent on some covariates. This flexibility can significantly impact the estimation of regression coefficients and enhance clustering performance by mitigating misspecification problems arising from assumptions about the relationships between variables. In addition, we address the issue of identifiability in the proposed model to ensure the reliability of the outcomes derived from proposed approach. 

The remainder of this paper is organized as follows. Section 2 reviews MoE and introduces the proposed models, addressing the identifiability. Section 3 outlines the estimation procedure, while Section 4 deals with practical issues related to the proposed models. We present the results of simulation studies in Section 5 and apply the models to real datasets in Section 6. Finally, we provide a discussion in Section 7.

%%%%%%%%%%%%%%%%%%%%%%%%%%%%%%%%%%%%%%%%%%%%%%%%%%%%%%%%%%%%%%%%%%%%%%%%%%%%%%
\section{Semiparametric mixture of partially linear experts}
%%%%%%%%%%%%%%%%%%%%%%%%%%%%%%%%%%%%%%
\subsection{Mixture of linear experts}
%%%%%%%%%%%%%%%%%%%%%%%%%%%%%%%%%%%%%%

Let $Z$ be a latent variable indicating the membership of the observations. 
MoE is a useful tool when exploring the relationship between the response variable and covariates in the presence of unobserved information about $C$ heterogeneous subpopulations by latent variable $Z$. \cite{jacobs1991adaptive} presented the conditional probability distribution of the response variable given the covariates as
\begin{align}
\label{moe}
    p(y | \boldsymbol{x})  = \sum_{c=1}^{C}p(Z=c \mid \xb)p(y \mid \xb,Z=c) = \sum^C_{c=1} \pi_c(\boldsymbol{x}) \phi(y; \beta_{0c} + \boldsymbol{x}^{\top}\boldsymbol{\beta}_c,\sigma_c^2), 
\end{align}
where $\pi_c(\cdot)$, $c=1, \ldots, C$, represents a mixing probability that depends on the given covariates, with $0 < \pi_c(\xb) < 1$ and $\sum_{c=1}^{C}\pi_c(\xb) = 1$. Additionally, $(\beta_{0c}, \betab_c^{\top})$ represents a $(p+1)$-dimensional vector for $c=1, \ldots, C$, and $\phi(\cdot ;\mu, \sigma^2) $ denotes the probability density function of the normal distribution with mean $\mu$ and variance $\sigma^2$. 

Regression clustering, the process of identifying the latent variable $Z$, holds significant importance in understanding the prediction mechanism employed by MoE. The predicted value of the response variable for new covariate $\Xb = \xb$ is determined as 
\begin{align*}
E(Y \mid \Xb = \xb) = \sum^C_{c=1} \pi_c(\boldsymbol{x}) \cdot (\beta_{0c} + \boldsymbol{x}^{\top}\boldsymbol{\beta}_c), 
\end{align*}
where $\pi_c(\xb)$ is often called as the gating network, while $(\beta_{0c} + \boldsymbol{x}^{\top}\boldsymbol{\beta}_c)$ is referred to as the expert network.
That is, the prediction structure can be understood as an ensemble model as shown in Figure \ref{moe_structure} because the predicted values are obtained by combining the outcomes of the expert networks using the gating network.
Consequently, selecting an appropriate latent variable $Z$ is a crucial aspect of the MoE model.

\begin{figure}[h]
\centering
\includegraphics[width=0.8\textwidth]{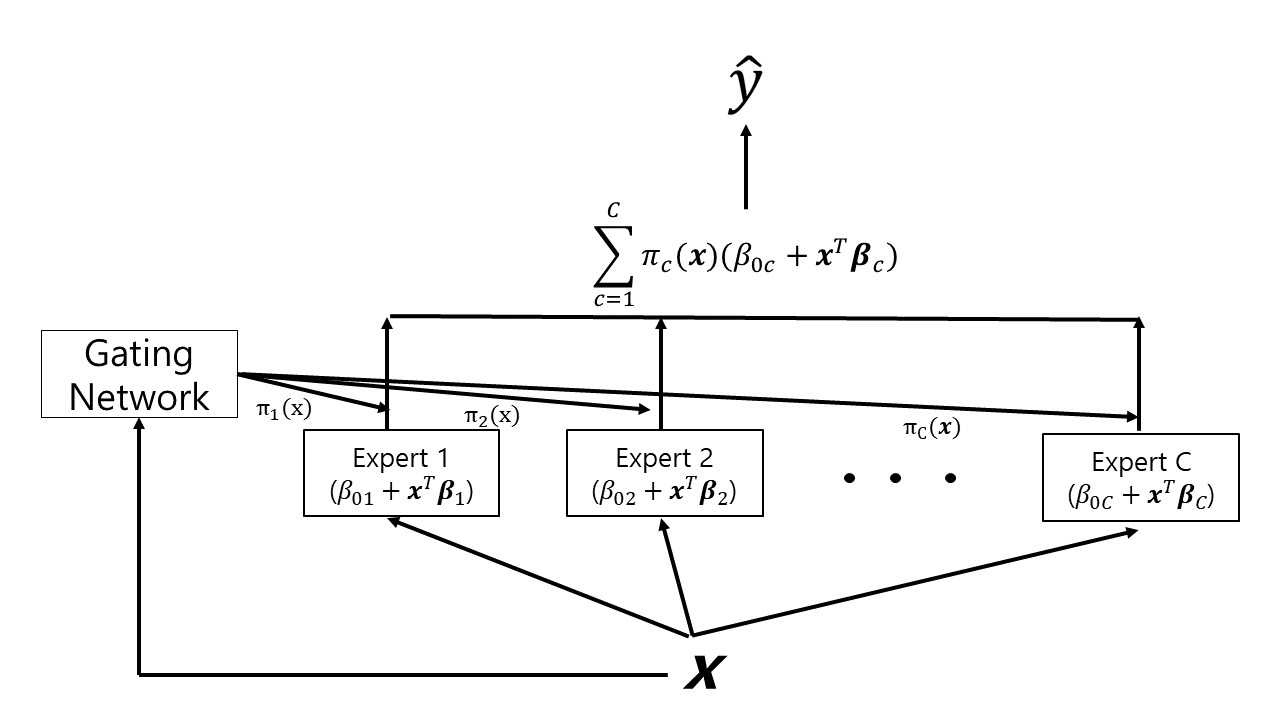}
\caption{Predicting mechanism of MoE}\label{moe_structure}
\end{figure}

MoE is applied in various fields as a machine learning model. For example, \cite{li2019modeling} used MoE to explain differences in lane-changing behavior based on driver characteristics. \cite{shen2019mixture} extended MoE to adapt to the characteristics of data for creating a translation model capable of various translation styles. Additionally, \cite{riquelme2021scaling} proposed Vision MoE, which maintains superior performance compared to existing models in image classification while significantly reducing estimation time.

%%%%%%%%%%%%%%%%%%%%%%%%%%%%%%%%%%%%%%%%%%%%%%%%%%%%%%%%%%%%%%%%%%%%%%%%%%%%%%
%%%%%%%%%%%%%%%%%%%%%%%%%%%%%%%%%%%%%%
\subsection{Proposed model}
%%%%%%%%%%%%%%%%%%%%%%%%%%%%%%%%%%%%%%

In this section, we introduce a semiparametric mixture of partially linear experts (MoPLE) model. The MoPLE is constructed by considering each expert network of the MoE model as a partial linear model \eqref{PLM}, which can be defined as
\begin{equation}
\label{MoPLE}
        p(y \mid \boldsymbol{x},u) = \sum^C_{c=1} \pi_c(\boldsymbol{x}; \alpha_{0c}, \alphab_c) \phi(y;\boldsymbol{x}^{\top}\boldsymbol{\beta}_c+ g_c(u),\sigma^2_c). 
\end{equation}
Here, $\pi_c(\boldsymbol{x}; \alpha_{0c}, \alphab_c)$ is defined as $\pi_c(\boldsymbol{x}; \alpha_{0c}, \alphab_c) = \frac{\exp(\alpha_{0c} + \boldsymbol{x}^{\top}\boldsymbol{\alpha}_c)}{\sum^C_{j=1}\exp(\alpha_{0j} + \boldsymbol{x}^{\top}\boldsymbol{\alpha}_j)}$,
where $(\alpha_{0c}, \alphab_c^{\top})$ represents a $(p+1)$-dimensional vector ($c = 1,2,\ldots, C$), especially with $(\alpha_{0C}, \boldsymbol{\alpha}_C^{\top})$ being a zero vector.
When $C=1$, since $\pi_C(\boldsymbol{x}; \alpha_{0C}, \alphab_C)$ is equal to $1$, \eqref{MoPLE} simply represents a partial linear model \eqref{PLM}. If $C > 1$ and $g_c(\cdot) = 0$, \eqref{MoPLE} is equivalent to the MoE \eqref{moe}.

Identifiability is a fundamental concern when dealing with finite mixture models. \cite{hennig2000identifiablity} established that finite mixture of regressions is identifiable when the domain of $\Xb$ includes an open set in $\mathbb{R}^p$.
Additionally, \cite{huang2012mixture} demonstrated that \eqref{moe}, with unspecified $\pi_c(\xb)$ for $c=1,2,\ldots,C$, is identifiable up to a permutation of relabeling.
Furthermore, \cite{wu2017estimation} extended these findings by establishing the identifiability of the mixture of partially linear regressions, assuming that $\alphab = (\alphab_1^{\top}, \alphab_2^{\top}, \ldots, \alphab_C^{\top})^{\top}$ is a zero vector in \eqref{MoPLE}.
Building upon these results, the following theorem establishes the identifiability of model \eqref{MoPLE}.
\\
 
\begin{theorem}
    \label{thm1}
Suppose that the functions $g_c(\cdot)$, $c = 1, 2, \ldots, C$, are continuous, and the parameter vectors $(\betab_c, \sigma_c^2)$ are distinct in $\mathbb{R}^{p+1}$ for $c = 1, 2, \ldots, C$. Additionally, assume that the covariate $\Xb$ does not contain a constant, and none of its components can be a deterministic function of $U$. If the support of $\Xb$ contains an open set in $\mathbb{R}^p$, then \eqref{MoPLE} is identifiable up to a permutation of its components for almost all $(\xb^{\top}, u)^{\top} \in \mathbb{R}^{p+1}$.
\end{theorem}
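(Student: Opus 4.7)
The plan is to reduce the identifiability of \eqref{MoPLE} to the known identifiability of the MoE at each fixed value of $u$, and then paste the resulting pointwise identifications into a single permutation by exploiting the distinctness of $(\betab_c,\sigma_c^2)$ and the continuity of $g_c$. Suppose two parameter configurations, indexed by $C$ and $\tilde C$, generate the same conditional density $p(y\mid\xb,u)$ for almost every $(\xb,u)\in\mathbb{R}^{p+1}$. Fubini's theorem then gives a full-measure set of $u$ for which the equality of densities holds for almost every $\xb$ in the open set $\mathcal{O}\subset\mathbb{R}^p$ guaranteed by the support hypothesis.

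Fix such a $u$. The equation becomes an equality of two MoE densities in $\xb$ and $y$, with Gaussian experts of means $\xb^{\top}\betab_c+g_c(u)$, variances $\sigma_c^2$, and softmax gating. Applying the MoE identifiability result of \citet{huang2012mixture} to this fixed-$u$ slice yields $C=\tilde C$ and a permutation $\tau_u$ of $\{1,\ldots,C\}$ such that $\sigma_c^2=\tilde\sigma_{\tau_u(c)}^2$, $\pi_c(\xb;\alpha_{0c},\alphab_c)=\tilde\pi_{\tau_u(c)}(\xb;\tilde\alpha_{0\tau_u(c)},\tilde\alphab_{\tau_u(c)})$, and $\xb^{\top}\betab_c+g_c(u)=\xb^{\top}\tilde\betab_{\tau_u(c)}+\tilde g_{\tau_u(c)}(u)$ for $\xb$ ranging over $\mathcal{O}$. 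Separating the affine-in-$\xb$ part from the $u$-dependent intercept, which is permitted because $\mathcal{O}$ is open and no coordinate of $\Xb$ is a deterministic function of $U$, gives $\betab_c=\tilde\betab_{\tau_u(c)}$ and $g_c(u)=\tilde g_{\tau_u(c)}(u)$.

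The central step is to upgrade $\tau_u$ into a single permutation $\tau$ that does not depend on $u$. Because the pairs $(\betab_c,\sigma_c^2)$ are distinct by hypothesis, the relations $\betab_c=\tilde\betab_{\tau_u(c)}$ together with $\sigma_c^2=\tilde\sigma_{\tau_u(c)}^2$ pin $\tau_u$ down uniquely, so $\tau_u\equiv\tau$ on a full-measure set of $u$. The identity $g_c(u)=\tilde g_{\tau(c)}(u)$ then holds on a dense subset of $\mathbb{R}$, and the assumed continuity of $g_c$ and $\tilde g_{\tau(c)}$ extends it to every $u\in\mathbb{R}$. For the gating parameters, the equality of softmax probabilities on $\mathcal{O}$, combined with the convention $(\alpha_{0C},\alphab_C^{\top})=\mathbf{0}$ imposed on both sides, lets us take log-ratios against the common reference component and reduces matters to equality of two affine functions of $\xb$ on an open set, forcing $\alpha_{0c}=\tilde\alpha_{0\tau(c)}$ and $\alphab_c=\tilde\alphab_{\tau(c)}$.

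The main obstacle is the bootstrap from the $u$-dependent $\tau_u$ to a universal $\tau$; this is precisely where the distinctness hypothesis on $(\betab_c,\sigma_c^2)$ is indispensable, since otherwise two component labels could be interchangeable at some values of $u$ but not others. A secondary technical point, but one that must be handled carefully, is verifying that the open-set support of $\Xb$ combined with the assumption that no component of $\Xb$ is a deterministic function of $U$ is strong enough to split the affine identity $\xb^{\top}\betab_c+g_c(u)=\xb^{\top}\tilde\betab_{\tau(c)}+\tilde g_{\tau(c)}(u)$ into coefficient and intercept equalities; if any coordinate of $\Xb$ collapsed to a function of $U$, one could trade off between $\betab_c$ and $g_c$, and identifiability would fail.
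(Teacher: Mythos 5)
Your proof is correct and follows essentially the same route as the paper's: both invoke the identifiability lemma of \cite{huang2012mixture} on fixed-covariate slices, use the distinctness of $(\betab_c,\sigma_c^2)$ to glue the slice-wise permutations into a single permutation, and recover the gating parameters from the softmax log-ratios on an open set. The only cosmetic difference is that you condition on $u$ alone and exploit that distinct affine functions of $\xb$ cannot agree on an open set, whereas the paper conditions on the full $(\xb^{\top},u)^{\top}$ and disposes of the coincidence sets via unions of measure-zero hyperplanes.
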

\begin{proof}

In \eqref{MoPLE}, suppose that there exist $\tilde\alpha_{0k}$, $\tilde\alphab_k$, $\tilde\betab_{k}$ and $\tilde{g}_k(\cdot)$, $k = 1,2,\ldots,K$,  satisfying 
\begin{align}
\label{two models}
\sum^C_{c=1} \pi_c(\boldsymbol{x}; \alpha_{0c}, \alphab_c) \phi(y;\boldsymbol{x}^{\top}\boldsymbol{\beta}_c+ g_c(u),\sigma^2_c) = 
\sum_{k=1}^{K} \pi_k(\boldsymbol{x}; \tilde{\alpha}_{0k}, \tilde{\alphab}_k) \phi(y;\boldsymbol{x}^{\top} \tilde{\boldsymbol{\beta}}_k+ \tilde{g}_k(u),\tilde{\sigma}^2_k),
\end{align}
where ($\tilde{\betab}_k, \tilde{\sigma}_k^2$), $k = 1,2,\ldots, K$, are distinct.
Consider the set $\{\xb \in \mathbb{R}^p : \xb^{\top} \betab_{c_1} + g_{c_1}(u) = \xb^{\top} \betab_{c_2} + g_{c_2}(u) \}$ for any $\betab_{c_1}$ and $\betab_{c_2}$ ($c_1$, $c_2$ $\in$ ${1,2,\ldots,C }$ ), where $\betab_{c_1} \neq \betab_{c_2}$ and $\sigma^2_{c_1} = \sigma^2_{c_2}$, for a given $U = u$. This set represents a $(p-1)$-dimensional hyperplane in $\mathbb{R}^p$. For any pair of $\betab_{c_1}$ and $\betab_{c_2}$ with $\betab_{c_1} \neq \betab_{c_2}$ and $\sigma^2_{c_1} = \sigma^2_{c_2}$, the union of a finite number of such hyperplanes, where $(\xb^{\top} \betab_{c_1}, \sigma^2_{c_1}) = (\xb^{\top} \betab_{c_2}, \sigma^2_{c_2})$, has a zero Lebesgue measure in $\mathbb{R}^p$. This fact remains true for the finite number of sets $\{\xb \in \mathbb{R}^p : \xb^{\top} \tilde{\betab}_{k_1} + \tilde{g}_{k_1}(u) = \xb^{\top} \tilde{\betab}_{k_2} + \tilde{g}_{k_2}(u) \}$ for any $\tilde{\betab}_{k_1}$ and $\tilde{\betab}_{k_2}$ ($k_1$, $k_2$ $\in$ $\{1,2,\ldots,K \}$ ), where $\tilde{\betab}_{k_1} \neq \tilde{\betab}_{k_2}$ and $\tilde{\sigma}^2_{k_1} = \tilde{\sigma}^2_{k_2}$ for given $U = u$.

From Lemma 1 of \cite{huang2012mixture}, it can be established that \eqref{two models} is identifiable when conditioned on $\bm{w} = (\xb^{\top}, u)^{\top}$, under the condition that both sets of $(\boldsymbol{x}^{\top}\boldsymbol{\beta}_c, g_c(u))$ for $c = 1, 2, \ldots, C$ and $(\boldsymbol{x}^{\top} \tilde{\boldsymbol{\beta}}_{k},  \tilde{g}_{k} (u))$ for $k = 1,2,\ldots,K$ are distinct. 
That is, if $\bm{w}$ is given, we obtain $C = K$, and there exists a permutation $\tau_\wb = \{ \tau_\wb{(1)}, \tau_\wb{(2)}, \ldots, \tau_\wb{(C)} \}$ among the finite number of possible permutations of $\{1,2,\ldots,C \}$ such that  
\begin{align*}
\pi_c(\boldsymbol{x}; \alpha_{0c}, \alphab_c) = \pi_{\tau_\wb{(c)}}(\boldsymbol{x}; \tilde{\alpha}_{0{\tau_\wb{(c)}}}, \tilde{\alphab}_{\tau_\wb{(c)}}), \ \ \boldsymbol{x}^{\top}\boldsymbol{\beta}_c+ g_c(u) = \boldsymbol{x}^{\top} \tilde{\boldsymbol{\beta}}_{\tau_\wb{(c)}}+ \tilde{g}_{\tau_\wb{(c)}}(u), \ \ \sigma^2_c = \tilde{\sigma}^2_{\tau_\wb{(c)}} 
\end{align*}
where $c = 1,2,\ldots,C$.

Now, let us consider any permutation $\tau = \{ \tau{(1)}, \tau{(2)}, \ldots, \tau{(C)} \}$ that satisfies 
\begin{align}
\label{ident}
\boldsymbol{x}^{\top}\boldsymbol{\beta}_c+ g_c(u) = \boldsymbol{x}^{\top} \tilde{\boldsymbol{\beta}}_{\tau{(c)}}+ \tilde{g}_{\tau{(c)}}(u), \ \ \sigma^2_c = \tilde{\sigma}^2_{\tau{(c)}}, \ \ c = 1,2,\ldots,C,
\end{align}
for some $\wb$,
and verify that $\tau_\wb$ has to be unique $\tau$.
Suppose that $\boldsymbol{\beta}_c \neq \tilde{\boldsymbol{\beta}}_{\tau{(c)}}$ and $g_c(u) \neq \tilde{g}_{\tau{(c)}}(u)$. This contradicts to the assumption that $\Xb$ cannot be a deterministic function of $U$. When $\boldsymbol{\beta}_c \neq \tilde{\boldsymbol{\beta}}_{\tau{(c)}}$ and $g_c(u) = \tilde{g}_{\tau{(c)}}(u)$, the set $\{\xb \in \mathbb{R}^p : \xb^{\top} \betab_{c} = \xb^{\top} \tilde{\boldsymbol{\beta}}_{\tau{(c)}} \}$ has zero Lebesgue measure since it is a $(p-1)$ dimensional hyperplane in $\mathbb{R}^p$. Because $\boldsymbol{\beta}_c = \tilde{\boldsymbol{\beta}}_{\tau{(c)}}$ indicates $g_c(u) = \tilde{g}_{\tau{(c)}}(u)$, we obtain that 
\begin{align*}
\boldsymbol{\beta}_c  = \tilde{\boldsymbol{\beta}}_{\tau{(c)}}, \ \ g_c(u) =  \tilde{g}_{\tau{(c)}}(u)
\end{align*}
for $c = 1,2,\ldots,C$.
Since the parameter sets $(\betab_c, \sigma_c^2)$ and $(\tilde{\boldsymbol{\beta}}_k, \tilde{\sigma}^2_k)$ for $c, k \in \{1,2,\ldots,C \}$ are distinct, the permutation $\tau$, satisfying \eqref{ident} on a subset of the support of $\wb$ with nonzero Lebesgue measure, is unique.

Because $\pi_c(\cdot)$ and $\pi_{\tau{(c)}}(\cdot)$ are continuous and one to one function, it follows that $\alpha_{0c} + \xb^{\top} \alphab_c = \tilde{\alpha}_{0{\tau{(c)}}} + \xb^{\top} \tilde{\alphab}_{\tau{(c)}}$ for $c = 1,2,\ldots,C$. Moreover, as $\Xb$ cannot be a constant, $\alpha_{0c} = \tilde{\alpha}_{0{\tau{(c)}}}$ must be hold. Consequently, this indicates $\alphab_c = \tilde{\alphab}_{\tau{(c)}}$ , except for the set $\{\xb \in \mathbb{R}^p : \alpha_{0c} + \xb^{\top} \alphab_{c} = \tilde{\alpha}_{0{\tau{(c)}}} + \xb^{\top} \tilde{\alphab}_{\tau{(c)}} \}$, which has a zero Lebesgue measure in $\mathbb{R}^p$, for $c = 1,2,\ldots,C$.
Therefore, we can conclude that \eqref{MoPLE} is identifiable up to a permutation of its components. 
\end{proof}

%%%%%%%%%%%%%%%%%%%%%%%%%%%%%%%%%%%%%%%
\section{Estimation}
%%%%%%%%%%%%%%%%%%%%%%%%%%%%%%%%%%%%%%%

When considering the observed data $\{(y_i,\boldsymbol{x}_i,u_i)\}^n_{i=1}$, the log-likelihood function is defined as
\begin{equation}
    \ell(\boldsymbol{\Theta}, \boldsymbol{g}) = \sum^n_{i=1}\log \Bigg [\sum^C_{c=1} \pi_c(\xb) \phi\{y_i ; \boldsymbol{x}_i^{\top}\boldsymbol{\beta}_c + g_c(u_i), \sigma^2_c\} \Bigg],  \label{likelihood}
\end{equation}
where $\boldsymbol{\Theta}$ is the set of all parameters and $\boldsymbol{g} = (g_1(\cdot), \ldots, g_C(\cdot))^{\top}$.
To find $\hat{\boldsymbol{\Theta}}$ and $\hat{\boldsymbol{g}}$ that maximize equation \eqref{likelihood}, we propose the Expectation Conditional Maximization (ECM) algorithm \citep{meng1993maximum} using the profile likelihood method. The latent indicator variable $Z_{ic}$ ($c=1,\ldots,C$), which indicates to which latent cluster the observed values belong, and the complete log-likelihood function are respectively defined as 
\begin{equation*}
Z_{ic} =
\begin{cases}
1, & \text{if the $i$-th observation belongs to the $c$-th latent cluster} \\
0, &  \text{otherwise}
\end{cases}
\end{equation*}
and
\begin{equation*}
    \ell_c(\boldsymbol{\Theta}, \boldsymbol{g}) = \sum^n_{i=1}\sum^C_{c=1}Z_{ic} \log \Bigg[\pi_c(\xb) \phi\{y_i|\boldsymbol{x}^{\top}_i\boldsymbol{\beta}_c + g_c(u_i),\sigma^2_c\} \Bigg]. 
\end{equation*}

In the E-step for the $(t+1)$th iteration of the ECM algorithm, $t=0,1,\ldots$, we obtain $Q(\boldsymbol{\Theta}^{(t)}, \boldsymbol{g}^{(t)}) = E[\ell_c(\boldsymbol{\Theta}, \boldsymbol{g}) | \boldsymbol{\Theta}^{(t)}, \boldsymbol{g}^{(t)}]$ using the posterior probability ${z}_{ic}^{(t+1)}$ given ${\boldsymbol{\Theta}}^{(t)}$ and ${\boldsymbol{g}}^{(t)}$, which is represented as
\begin{align*}
       {z}_{ic}^{(t+1)} = E(Z_{ic} |\boldsymbol{x}_i,y_i,\boldsymbol{\Theta}^{(t)}, \boldsymbol{g}^{(t)})
       = \frac{\pi_c^{(t)}(\xb)\phi\{y_i;\boldsymbol{x}^T_i\boldsymbol{\beta}_c^{(t)} + g^{(t)}_c(u_i),{\sigma_c^{2}}^{(t)}\}}{\sum^C_{j=1} \pi_j^{(t)}(\xb) \phi\{y_i;\boldsymbol{x}^{\top}_i\boldsymbol{\beta}_j^{(t)}+ {g_j^{{(t)}}(u_i)} ,{\sigma_j^2}^{(t)}\}}.  
\end{align*}
While keeping $\boldsymbol{\Theta}^{(t)}$ $(c = 1,2,\ldots,C)$ fixed, CM-step 1 involves updating ${\boldsymbol{g}}^{(t)}$ to ${\boldsymbol{g}}^{(t+1)}$ that maximizes the following local likelihood:
\begin{equation*}
    \ell_h(\boldsymbol{g})= \sum^n_{i=1}\sum^C_{c=1}{z}^{(t+1)}_{ic} \Bigg[\log\phi\{y_i;\boldsymbol{x}^T_i\boldsymbol{\beta}_c^{(t)} + g_c(u_j),{\sigma_j^2}^{(t)}\}\Bigg]K_h(u_i-u_j), 
\end{equation*}
where $j \in \{1,2, \ldots, n \}$, and $K_h({u}_i-{u}_j)$ represents the kernel weighting function with bandwidth $h$. 
Consequently, ${g}_c^{(t+1)}(u_j)$ can be calculated as
\begin{equation*}
    g^{(t+1)}_c(u_j) = \frac{\sum^n_{i=1} {z}^{(t+1)}_{ic} (y_i-\boldsymbol{x}^{\top}_i\boldsymbol{\beta}^{(t)}_c) K_h(u_i-u_j)}{\sum^n_{i=1}{z}^{(t+1)}_{ic}K_h(u_i-u_j)}. 
\end{equation*}
In CM-step 2, after fixing ${g}_c^{(t+1)}(u_j)$, we can determine $\boldsymbol{\Theta}^{(t+1)}$ as follows.
\begin{align*}
    \boldsymbol{\alpha}_c^{(t+1)} &= \boldsymbol{\alpha}_c^{(t)} - \Bigg[\frac{\partial^2 Q(\boldsymbol{\Theta}^{(t)}, \boldsymbol{g}^{(t+1)})}{\partial\boldsymbol{\alpha}_c\partial\boldsymbol{\alpha}_c^{\top}} \Bigg]^{-1}  \Bigg[
    \frac{\partial Q(\boldsymbol{\Theta}^{(t)}, \boldsymbol{g}^{(t+1)})}{\partial\boldsymbol{\alpha}_c} \Bigg],
\end{align*}
\begin{align*}
    \boldsymbol{\beta}^{(t+1)}_c = (\Tilde{\boldsymbol{X}}^{\top}{\boldsymbol{Z}}_c^{(t+1)}\Tilde{\boldsymbol{X}})^{-1}\Tilde{\boldsymbol{X}}^{\top}{\boldsymbol{Z}}_c^{(t+1)}\Tilde{\boldsymbol{y}}, 
\end{align*}
\begin{align*}
{\sigma_c^{2}}^{(t+1)} = \frac{\sum^n_{i=1}{z}_{ic}^{(t+1)}(y_i-\boldsymbol{x}_i\boldsymbol{\beta}^{(t+1)}_c-g^{(t+1)}_c(u_i))^2 }{\sum^n_{i=1}{z}_{ic}^{(t+1)}}. 
\end{align*}
Here, $\Tilde{\boldsymbol{X}} = (\boldsymbol{I}-\boldsymbol{S})\boldsymbol{X}$, $\Tilde{\boldsymbol{y}} = (\boldsymbol{I}-\boldsymbol{S})\boldsymbol{y}$, ${\boldsymbol{Z}}_c^{(t+1)}$ is a diagonal matrix with diagonal elements $z_{ic}^{(t+1)}$, $\boldsymbol{I}$ is a $n \times n$ identity matrix, and $\boldsymbol{S}$ is a $n \times n$ matrix with elements defined as
\begin{align*}
\boldsymbol{S}_{ij} = \frac{{z}^{(t+1)}_{ic}K_h(u_i-u_j)}{\sum^n_{i=1}{z}^{(t+1)}_{ic}K_h(u_i-u_j)}.
\end{align*}

%%%%%%%%%%%%%%%%%%%%%%%%%%%%%%%%%%%%%%%%%%%%%%%%%%%%%%%%%%%%%%%%%%%%%%
\section{Practical issues} 
%%%%%%%%%%%%%%%%%%%%%%%%%%%%%%%%%%%%%%%

In practice, it is recommend to explore multiple initial values when employing the ECM algorithm, as the mixture likelihood inherently exhibits multimodality. To acquire appropriate initial values, we utilize the mixture of linear experts approach as proposed by \cite{jacobs1991adaptive} for parameters such as $\alpha_{0c}$, $\alphab_c$, $\betab_c$, $g_c(u)$, and $\sigma_c^2$, where $c = 1, 2, \ldots, C$.
Specifically, we set $g_c(u)$ as $\beta_{0c}$ in \eqref{moe} when employing the mixture of linear experts, where $c = 1, 2, \ldots, C$. Multiple initial values are then selected by repeating the process of generating initial values and choosing the ones with the highest likelihood. In this study, we repeat this process 10 times to ensure the acquisition of suitable initial values.

Furthermore, it is crucial to employ suitable methods for determining the optimal number of mixture components. In this paper, we utilized the Bayesian information criterion (BIC; \citealt*{schwarz1978estimating}) obtained as $ -2\mathcal{\ell} + \log(n)\, \times\, \mathit{df}$, where $\mathcal{\ell}$ is the log-likelihood function and $\mathit{df}$ is degree of freedoms, to select the number of components.
However, directly applying the BIC to the proposed model is challenging due to the complexity of calculating degrees of freedom, particularly in the presence of non-parametric functions. Therefore, we adopt a modified approach for determining degrees of freedom, inspired by \cite{wu2017estimation}, as follows.
\begin{equation*}
    \mathit{df}\,=\, C \times \tau_{K}h^{-1}\vert\Omega\vert \left\{K(0) - \frac{1}{2}\int K^2(t)dt\right\} + (2C-1)(p+1), 
\end{equation*}
where $\Omega$ represents the support of the non-parametric component covariates and
\begin{equation*}
    \tau_{K} = \frac{K(0) - 0.5\int K^2(t)dt}{\int\{K(t) - 0.5K(t)\}^2dt}. 
\end{equation*}
{Given that the degrees of freedom depends on the bandwidth, we chose the bandwidth associated with the lowest BIC among the candidates.}

%%%%%%%%%%%%%%%%%%%%%%%%%%%%%%%%%%%%%%%%%%%%%%%%%%%%%%%%
\section{Simulaton studies}
%%%%%%%%%%%%%%%%%%%%%%%%%%%%%%%%%%%%%%%
In this section, we present simulation results demonstrating the performance of the proposed method compared to other estimation methods under various cases. Specifically, we consider the following methods for each simulated sample: 
\begin{enumerate}
\item MoE: Mixture of linear experts  
\item FMPLR: Finite mixture of partially linear regressions. 
\item MoPLE: Mixture of partially linear experts. 
\end{enumerate}
FMPLR was introduced by \cite{wu2017estimation}, where it is assumed that all $\alphab = (\alphab_1, \alphab_2, \ldots, \alphab_C)$ to be zero vectors. We utilize the $\texttt{MoEClust}$ in \textsf{R} package \citep{MoEClust} for MoE, while we implement our \textsf{R} program for FMPLR and MoPLE. 

We conduct three simulation scenarios, each comprising two mixture components as detailed in Table \ref{tab:simulations}. In each of these experiments, we assume that the covariates $X$ and $U$ are independent random variables following a standard uniform distribution. In the first experiment, we assume a linear relationship between $Y$ and $(X, U)$ within each mixture component, with the probability of observations belonging to latent clusters dependent on $X$.
In the second experiment, we introduce partially linear relationships between $Y$ and $(X, U)$ while keeping the probability of observations belonging to latent clusters independent of $X$.
In the third experiment, we also consider partially linear relationships, but it features the probability of observations belonging to latent clusters as dependent on $X$. {Hence, we can expect that MoE, FMPLR and MoPLE represent efficient methods for Case $\uppercase\expandafter{\romannumeral1}$, Case $\uppercase\expandafter{\romannumeral2}$, and Case $\uppercase\expandafter{\romannumeral3}$, respectively.}

\begin{table}[ht]
\caption{True parameters for each simulation scenarios}
\label{tab:simulations} 
	\centering
	\begin{tabular}{ c | c c | c c c  | c c c } 
       \hline \hline\noalign{\smallskip}
		\multirow{2}{*}{Scenarios} & \multicolumn{2}{c|}{Gating Network} & \multicolumn{3}{c|}{Component $1$}  & \multicolumn{3}{c}{Component $2$} \\
  & $\alpha_{01}$& $\alpha_{11}$  & $\beta_{1}$ & $g_1(u)$ & 
  $\sigma_1^2$ & $\beta_{2}$ & $g_2(u)$ & $ \sigma_2^2$   \\ 	   \hline
        Case $\uppercase\expandafter{\romannumeral1}$  &  -0.5  & 2 & -3 & -3u & 0.5 & 3 & 3u & 0.25   \\
       Case $\uppercase\expandafter{\romannumeral2}$  &  0  & 0 & -3 & $2u^2$ & 0.5 & 3 & $2\cos({\pi u})^2$ & 0.25   \\
       Case $\uppercase\expandafter{\romannumeral3}$  &  -0.5  & 2 & -3 & $2u^2$ & 0.5 & 3 & $2\cos({\pi u})^2$ & 0.25   \\
\noalign{\smallskip}\hline\noalign{\smallskip}
	\end{tabular}
\end{table}

\begin{table}[p]
\caption{Performance of each method for regression coefficients in Case $\uppercase\expandafter{\romannumeral1}$ (Boldfaced numbers indicate the best in each criterion)}
\label{tab:case1} 
	\centering
	\begin{tabular}{ c  c  c   c  c  c c c c} 
        \hline  \hline \noalign{\smallskip}
	\multirow{2}{*}{Method}	&
	\multirow{2}{*}{$n$}	&
	$\beta_1$ & $\beta_2$ & ${g}_1(\cdot)$ & ${g}_2(\cdot)$ & \multirow{2}{*}{ARI} & \multirow{2}{*}{AMI}  \\
 & & MSE  (bias) & MSE  (bias) & $\text{MAE}$  & $\text{MAE}$   & & \\ \hline
    \multirow{3}{*}{MoE} & $250$ &  $\boldsymbol{0.045}$  (0.016) & $\boldsymbol{0.037}$  (0.005) & $\boldsymbol{0.087}$ & $\boldsymbol{0.077}$ &  $\boldsymbol{0.961}$ & $\boldsymbol{0.923}$\\
                            & $500$ &  $\boldsymbol{0.024}$  (0.010) & $\boldsymbol{0.017}$  (-0.005) & $\boldsymbol{0.059}$ & $\boldsymbol{0.052}$ &  $\boldsymbol{0.962}$ & $\boldsymbol{0.922}$ \\
                            & $1000$ & $\boldsymbol{0.011}$  (-0.003) & $\boldsymbol{0.009}$  (-0.005) & $\boldsymbol{0.042}$ & $\boldsymbol{0.036}$ &  $\boldsymbol{0.963}$ & $\boldsymbol{0.923}$ \\
\hline
    \multirow{3}{*}{FMPLR} & $250$ &  0.049  (-0.033) & 0.040  (-0.023) & 0.159 & 0.130 & 0.952 & 0.908 \\
                            & $500$ &  0.026  (-0.004) & 0.019  (-0.034) & 0.113 & 0.093 & 0.954 & 0.908 \\
                            & $1000$ & 0.014  (-0.051) & 0.010  (-0.032) & 0.084 & 0.070 & 0.955 & 0.910  \\
\hline
    \multirow{3}{*}{MoPLE} & $250$ &  0.047  (0.014) & 0.040  (0.006) & 0.154 & 0.127 & 0.960 & 0.920\\
                            & $500$ &  0.024  (0.011) & 0.018  (-0.006) & 0.110 & 0.089 & 0.961 & 0.921 \\
                            & $1000$ & 0.011  (-0.001) & 0.051  (-0.016) & 0.082 & 0.081 & 0.961 & 0.920  \\
                            \hline
        \noalign{\smallskip}\hline\noalign{\smallskip}
	\end{tabular}
\end{table}

\begin{table}[p]
\caption{Performance of each method for regression coefficients in Case $\uppercase\expandafter{\romannumeral2}$ (Boldfaced numbers indicate the best in each criterion)}
\label{tab:case2} 
	\centering
	\begin{tabular}{ c  c  c   c  c  c c c c } 
        \hline  \hline \noalign{\smallskip}
	\multirow{2}{*}{Method}	&
	\multirow{2}{*}{$n$}	&
	$\beta_1$ & $\beta_2$ & ${g}_1(\cdot)$ & ${g}_2(\cdot)$  & \multirow{2}{*}{ARI} & \multirow{2}{*}{AMI}\\
 & & MSE  (bias) & MSE  (bias) & $\text{MAE}$ & $\text{MAE}$   & & \\ \hline
    \multirow{3}{*}{MoE} & $250$ &  0.077  (-0.062) & 0.120  (-0.036) & 0.362 & 1.056 & 0.652 & 0.562 \\
                            & $500$ &  0.041  (-0.079) & 0.056  (-0.033) & 0.361 & 1.063 & 0.657 & 0.562 \\
                            & $1000$ & 0.019  (-0.053) & 0.030  (-0.033) & 0.356 & 1.063 & 0.664 & 0.565 \\
\hline
    \multirow{3}{*}{FMPLR} & $250$ &  $\boldsymbol{0.069}$  (0.014) & $\boldsymbol{0.035}$  (0.015) & $\boldsymbol{0.169}$ & 0.231 & $\boldsymbol{0.737}$ & 0.639 \\
                            & $500$ &  0.079  (0.026) & 0.043  (0.005) & 0.126 & 0.204 & 0.741 & 0.643\\
                            & $1000$ & 0.033  (0.040) & $\boldsymbol{0.009}$  (0.001) & $\boldsymbol{0.095}$ & 0.160 & 0.748 & 0.649 \\
\hline
    \multirow{3}{*}{MoPLE} & $250$ &  ${0.071}$  (0.014) & ${0.035}$  (0.012) & ${0.171}$ & $\boldsymbol{0.214}$ & 0.734 & $\boldsymbol{0.640}$ \\
                            & $500$ &  $\boldsymbol{0.035}$  (0.006) & $\boldsymbol{0.018}$  (0.010) & $\boldsymbol{0.125}$ & $\boldsymbol{0.171}$ & $\boldsymbol{0.744}$ & $\boldsymbol{0.646}$ \\
                            & $1000$ & $\boldsymbol{0.022}$  (0.029) & ${0.031}$  (-0.013) & ${0.101}$ & $\boldsymbol{0.131}$  & $\boldsymbol{0.750}$ & $\boldsymbol{0.651}$ \\
                                                        \hline
        \noalign{\smallskip}\hline\noalign{\smallskip}
	\end{tabular}
\end{table}

\begin{table}[p]
\caption{Performance of each method for regression coefficients in Case $\uppercase\expandafter{\romannumeral3}$ (Boldfaced numbers indicate the best in each criterion)}
\label{tab:case3} 
	\centering
	\begin{tabular}{ c  c  c   c  c  c c c c } 
        \hline  \hline \noalign{\smallskip}
	\multirow{2}{*}{Method}	&
	\multirow{2}{*}{$n$}	&
	$\beta_1$ & $\beta_2$ & ${g}_1(\cdot)$ & ${g}_2(\cdot)$ & \multirow{2}{*}{ARI} & \multirow{2}{*}{AMI}\\
 & & MSE  (bias) & MSE  (bias) & $\text{MAE}$  & $\text{MAE}$  & & \\ \hline
    \multirow{3}{*}{MoE} & $250$ &  $\boldsymbol{0.062}$ (-0.074) & 0.230  (-0.169) & 0.361 & 1.100 & 0.641 & 0.529 \\
                            & $500$ &  $\boldsymbol{0.034}$  (-0.079) & 0.127  (-0.175) & 0.357 & 1.074 & 0.645 & 0.529\\
                            & $1000$ & 0.020  (-0.076) & 0.087  (-0.207) & 0.348 & 1.0578 & 0.652 & 0.533 \\
\hline
    \multirow{3}{*}{FMPLR} & $250$ &  0.078  (-0.118) & 0.215  (-0.170) & 0.182 & 0.270 & 0.661 & 0.556 \\
                            & $500$ &  0.044  (-0.132) & 0.075  (-0.130) & 0.146 & 0.203 & 0.671 & 0.562\\
                            & $1000$ & 0.036  (-0.123) & 0.075  (-0.145) & 0.125 & 0.193 & 0.675 & 0.566  \\
\hline
    \multirow{3}{*}{MoPLE} & $250$ &  ${0.066}$  (0.038) & $\boldsymbol{0.064}$  (-0.020) & $\boldsymbol{0.172}$ & $\boldsymbol{0.245}$ & $\boldsymbol{0.743}$ & $\boldsymbol{0.638}$ \\
                            & $500$ &  ${0.038}$  (0.038) & $\boldsymbol{0.086}$  (-0.039) & $\boldsymbol{0.123}$ & $\boldsymbol{0.200}$ & $\boldsymbol{0.748}$ & $\boldsymbol{0.641}$ \\
                            & $1000$ & $\boldsymbol{0.017}$  (0.038) & $\boldsymbol{0.076}$  (-0.045) & $\boldsymbol{0.094}$ & $\boldsymbol{0.161}$ & $\boldsymbol{0.771}$ & $\boldsymbol{0.667}$ \\
                                                                                    \hline
        \noalign{\smallskip}\hline\noalign{\smallskip}
	\end{tabular}
\end{table}

The performance of each method is evaluated by calculating the bias as $\frac{1}{r} \sum_{j = 1}^{r} (\hat{\beta}_{c(j)} - {\beta}_{c})$ and mean square error (MSE) as $\frac{1}{r} \sum_{j = 1}^{r} (\hat{\beta}_{c(j)} - {\beta}_{c})^2$, where $\beta_{c}$ and $\hat{\beta}_{c(j)}$ are the true regression coefficient in $c$th expert network and the estimate of the $\beta_{c}$ from the $j$th sample for $c=1,2$ and $j = 1,2,\ldots, r$, respectively, for every regression parameter across a total of $r = 400$ replicated samples, with sample sizes of $n=$250, 500 and 1000. 
To assess the quality of the estimated nonparametric function $\hat{\boldsymbol{g}} = (\hat{g}_1(\cdot), \hat{g}_2(\cdot))$ for $\boldsymbol{g} = ({g}_1(\cdot), g_2(\cdot))$, we utilize the mean absolute error (MAE) defined as 
\begin{align*}
    \text{MAE} = D^{-1} \sum^D_{d=1} |\hat{g}_c(u_d) - g_c(u_d) |,
\end{align*}
where $c = 1,2,\ldots, C$.
We chose $\{u_d, d=1,\ldots,D\}$ as grid points evenly distributed within the range of the covariate $u$, with $D$ set to 100. We employ the Epanechnikov kernel function and determine regression clusters for observations using the maximum a posteriori. To assess the clustering performance, the Adjusted Rand Index (ARI, Hubert and Arabie, 1985) and Adjusted Mutual Information (AMI, Vinh et al., 2009) are computed. Note that smaller values of bias, MSE and $\text{MAE}$ indicate better performance, while larger values of ARI and AMI signify better performance.

In Case $\uppercase\expandafter{\romannumeral1}$, MoE exhibits the best performance across all criteria, while MoPLE ranks second in terms of clustering performance. 
In Case $\uppercase\expandafter{\romannumeral2}$, MoPLE performs the best in terms of ARI and AMI, while FMPLR and MoPLE are competitive with regard to the estimating parameters. 
In Case $\uppercase\expandafter{\romannumeral3}$, MoPLE demonstrates the best with regard to almost all criteria compared to the other methods. Overall, MoPLE demonstrates competitive performance, ranking either as the best or the second best method across all cases.

%%%%%%%%%%%%%%%%%%%%%%%%%%%%%%%%%%%%%%%%%%%%%%%%%%%%%%%%%%%%%%%%%%%%%%%%%%%%%%
\section{Real data analysis}

%%%%%%%%%%%%%%%%%%%%%%%%%%%%%%%%%%%%%%
\subsection{Prestige dataset}
%%%%%%%%%%%%%%%%%%%%%%%%%%%%%%%%%%%%%%%

For the first real data analysis, we consider the Prestige dataset, which is available in the car package in \textsf{R}. It comprises 102 observations with the variable such as Prestige, indicating occupational prestige from a mid-1960s social survey, Education, representing the average years of education for workers in 1971, Income, denoting the standardized average income of workers in 1971, and Occupational types, specifying occupational categories like professional, white-collar, and blue-collar occupations.
In this study, we model the response variable $Y$ as Prestige, where $X$ represents Education, and $U$ represents Income. Additionally, we assume that the latent variable is associated with Occupational types. 

Table \ref{tab:prestige_BIC} displays the BIC values obtained by each method for the Prestige dataset. MoPLE correctly selects the expected number of components, while MoE and FMPLR yield fewer clusters than expected.
The clustering performance of each method is summarized in Table \ref{tab:prestige_clustering}. MoE performs the best in terms of ARI, whereas MoPLE excels in terms of AMI. As a result, MoPLE is considered the best method since it not only produces the expected number of clusters but also delivers competitive clustering performance. MoE is the second-best method, despite not selecting the expected number of clusters. This suggests that occupational types are dependent on education, and there are nonlinear relationships between prestige and income, at least within one component.

\begin{table}[ht]
\caption{BIC values for each method in prestige dataset (Boldfaced numbers indicate the smallest value in each criterion)}
\label{tab:prestige_BIC} 
	\centering
	\begin{tabular}{ c | r r  r    } 
        \hline \hline\noalign{\smallskip}
		Number of clusters & MoE & FMPLR & MoPLE   \\ 	   \hline
        {1} &  724.22 & 947.23 & 947.23    \\
       {2} & $\boldsymbol{718.24}$ & $\boldsymbol{864.11}$ & 852.19   \\
      {3} & 735.49 & 951.21 & $\boldsymbol{823.31}$   \\
      {4} & 736.40 & 1042.96 & 1126.14  \\
      {5} & 763.94 & 1633.26 & 1186.94   \\
\noalign{\smallskip}\hline\noalign{\smallskip}
	\end{tabular}
\end{table}

\begin{table}[ht]
\caption{Clustering performance for each method in prestige dataset (Boldfaced numbers indicate the largest value in each criterion)}
\label{tab:prestige_clustering} 
	\centering
	\begin{tabular}{ c | r r  r    } 
       \hline \hline\noalign{\smallskip}
		Index & MoE & FMPLR & MoPLE   \\ 	   \hline
        ARI &  $\boldsymbol{0.5096}$ & 0.0597 & 0.4779    \\
       AMI & 0.4012 & 0.0725 & $\boldsymbol{0.4506}$   \\
  \noalign{\smallskip}\hline\noalign{\smallskip}
	\end{tabular}
\end{table}

Based on the findings from MoPLE, the clusters denoted as $1$, $2$, and $3$ correspond to professional, white-collar, and blue-collar occupations, respectively. The estimated coefficients for the Education in Class 1, 2, and 3 are 2.331, 5.446, and 2.547, respectively. 
This suggests that the impact of the education on the prestige is most pronounced in white-collar.
{ Figure \ref{prestige} illustrates the estimated $g_c(u)$ for each cluster, where $c = 1,2,3$. We note a nonlinear association between prestige and income within cluster $1$, whereas clusters $2$ and $3$ exhibit a positive relationship between prestige and income, indicating an increasing trend.}

\begin{figure}[] 
\centering
\subfigure[Cluster 1]{
\includegraphics[width=0.3\linewidth, height=0.5\linewidth]{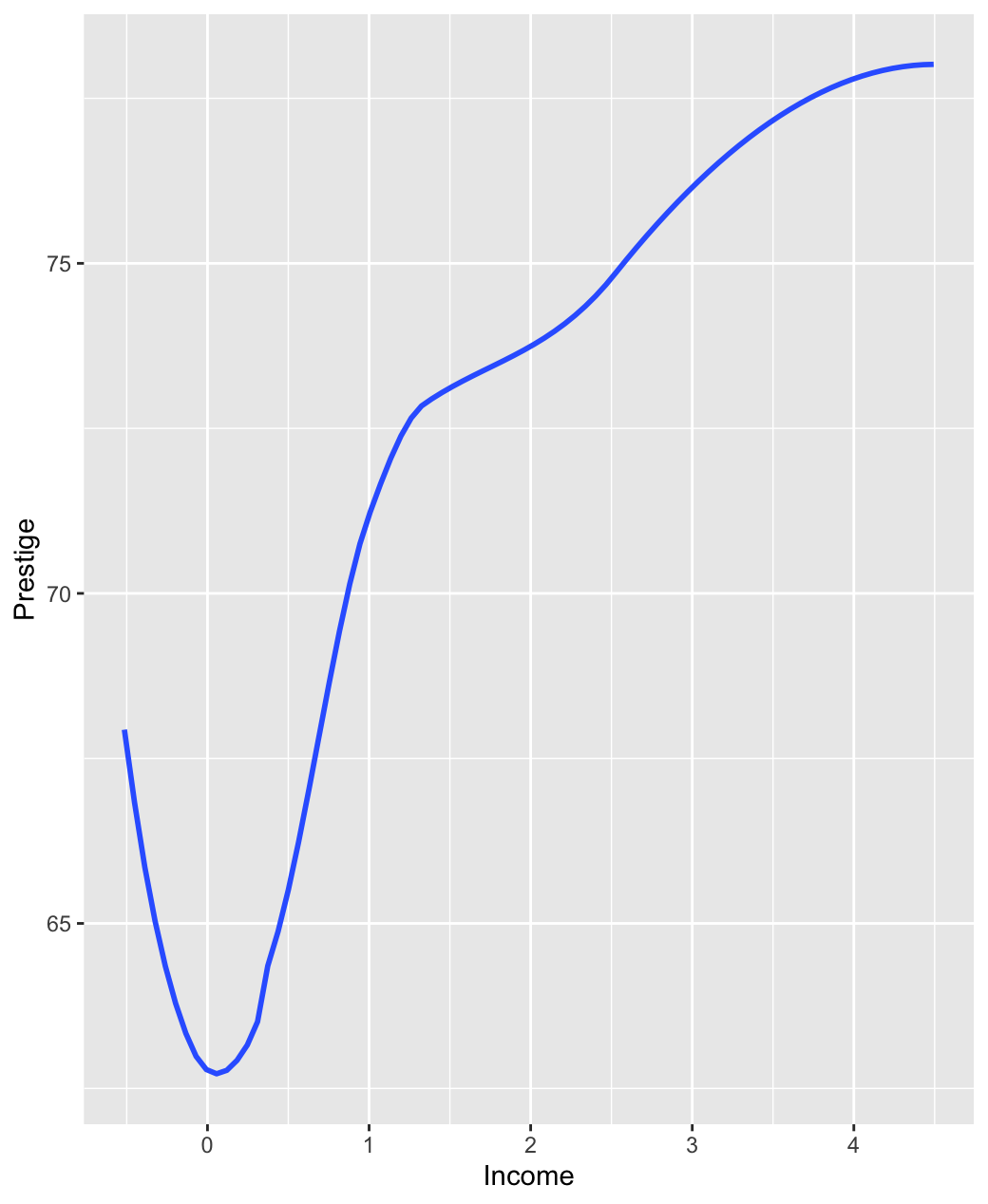}
}
\centering
\subfigure[Cluster 2]{
\includegraphics[width=0.3\linewidth, height=0.5\linewidth]{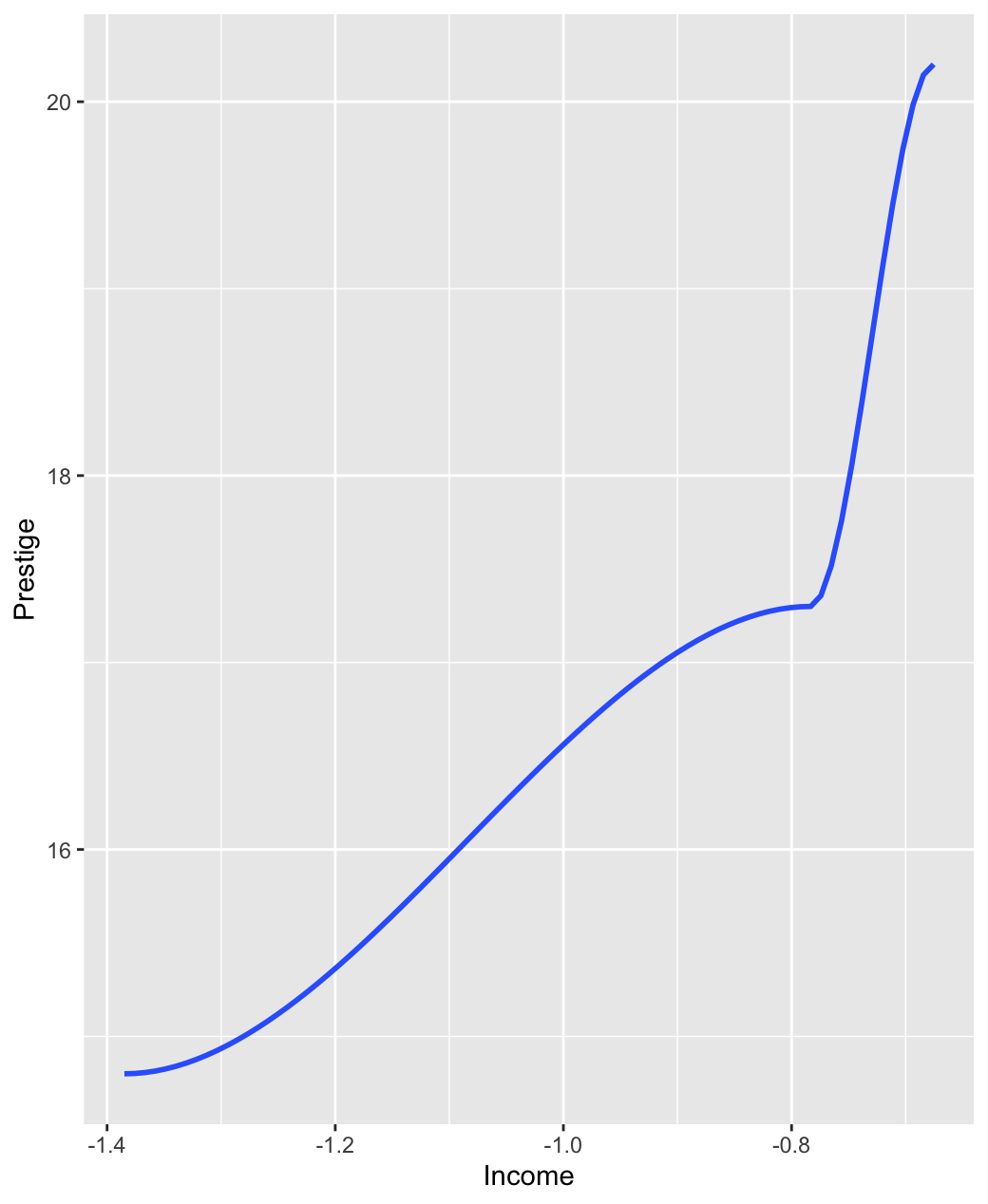}
}
\centering
\subfigure[Cluster 3]{
\includegraphics[width=0.3\linewidth, height=0.5\linewidth]{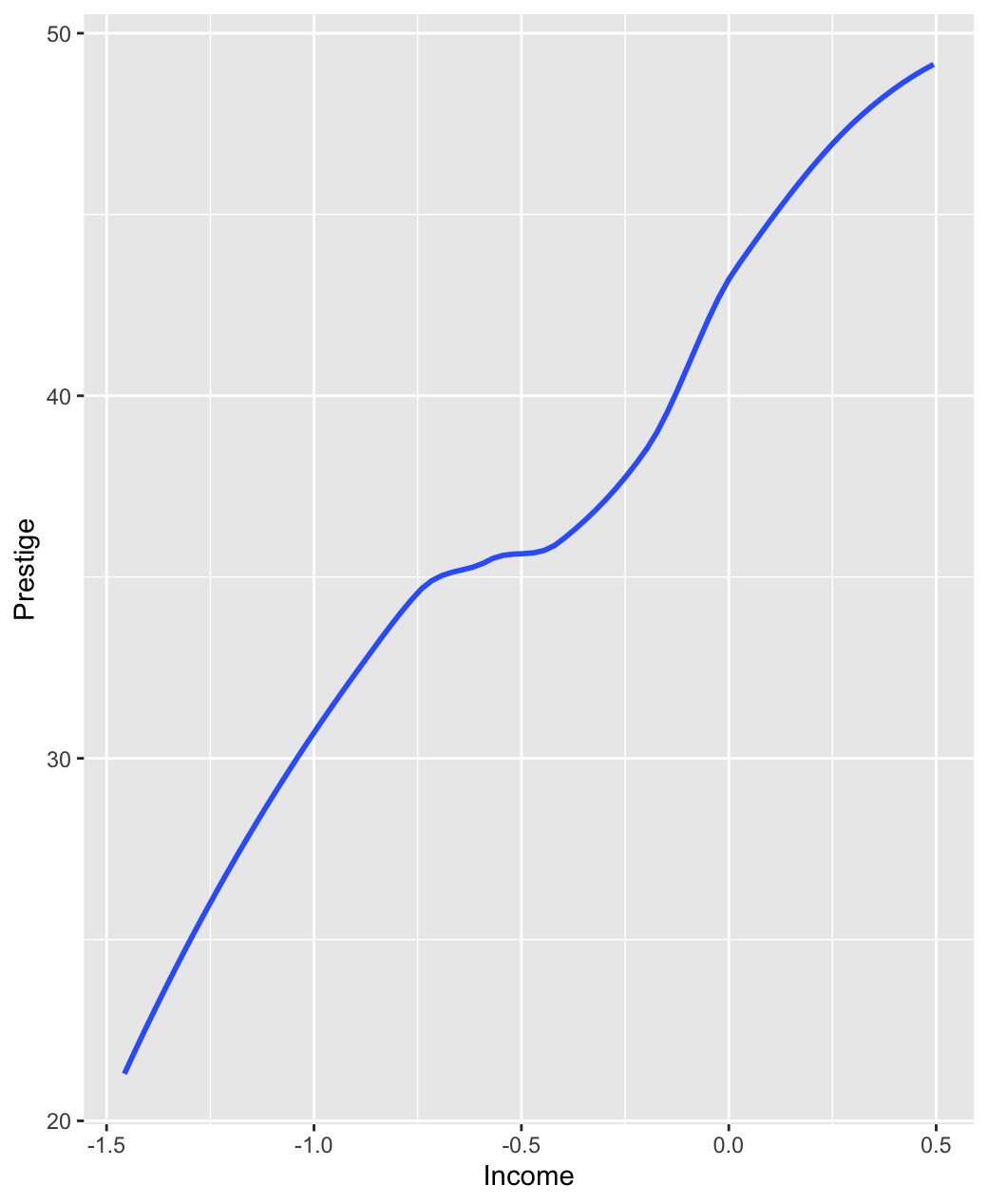}
}
\caption{Estimated $g_c(\cdot)$, $c=1,2,3$, through MoPLE for the Prestige dataset}\label{prestige}
\end{figure}

%%%%%%%%%%%%%%%%%%%%%%%%%%%%%%%%%%%%
\subsection{Gross domestic product dataset}
%%%%%%%%%%%%%%%%%%%%%%%%%%%%%%%%%%%%

In the second real data analysis, we examine gross domestic product (GDP) dataset sourced from the STARS database of World Bank. This dataset comprises information from 82 countries over the period 1960 to 1987 and includes some variables such as $\log$(GDP), indicating logarithm of real gross domestic product in million dollars, $\log$(Labor), representing logarithm of the economically active population aged 15 to 65, $\log$(Capital), implying logarithm of the estimated initial capital stock in each country, and $\log$(Education), denoting logarithm of the average years of education. 

Previously, researchers such as \cite{duffy2000cross} utilized this dataset to investigate the Cobb-Douglas specification, while \cite{wu2017estimation} examined how education and two other variables influence GDP using FMPLR with a fixed two-component mixture. 
In this paper, we investigate countries in 1975 with $Y = \log(\text{GDP})$, $\Xb = (\log(\text{Labor}), \log(\text{Capital}))$ and $U = \log(\text{Education})$, comparing clustering performance. To evaluate the clustering performance, we introduce a latent variable that indicates whether the country was classified as advanced or developing in 1975 based on International Monetary Fund (IMF).

\begin{table}[ht]
\caption{BIC values for each method in GDP dataset (Boldfaced numbers indicate the smallest value in each criterion)}
\label{tab:GDP_BIC} 
	\centering
	\begin{tabular}{ c | r r  r    } 
       \hline \hline\noalign{\smallskip}
		Number of clusters & MoE & FMPLR & MoPLE   \\ 	   \hline
        {1} & 74.46 & 337.10 & 337.10   \\
       {2} & 88.05 & 176.92 & $\boldsymbol{134.64}$   \\
      {3} & $\boldsymbol{60.95}$ & $\boldsymbol{169.90}$ & 178.15   \\
      {4} & 114.48 & 265.12 & 232.49  \\
      {5} & 110.04 & 419.94 & 405.89   \\
\noalign{\smallskip}\hline\noalign{\smallskip}
	\end{tabular}
\end{table}

\begin{table}[ht]
\caption{Clustering performance for each method in GDP dataset(Boldfaced numbers indicate the largest value in each criterion)}
\label{tab:GDP_clustering} 
	\centering
	\begin{tabular}{ c | r r  r    } 
      \hline  \hline\noalign{\smallskip}
		Index & MoE & FMPLR & MoPLE   \\ 	   \hline
        ARI &  0.3449 & -0.1238 & $\boldsymbol{0.7165}$    \\
       AMI & 0.3280 & 0.1042 & $\boldsymbol{0.6152}$   \\
  \noalign{\smallskip}\hline\noalign{\smallskip}
	\end{tabular}
\end{table}

Table \ref{tab:GDP_BIC} and Table \ref{tab:GDP_clustering} present the BIC values and clustering performance, respectively.
In Table \ref{tab:GDP_BIC}, MoPLE yield the expected number of clusters, while MoE and FMPLR selects more clusters than expected.
In Table \ref{tab:GDP_clustering}, MoPLE achieves the best results in terms of both ARI and AMI, followed by MoE. These findings suggest that MoPLE is the most suitable method when attempting to identify clusters among countries based on their classification as advanced or developing.

According to the results derived from MoPLE, the clusters labeled as $1$ and $2$ represent advanced and developing countries, respectively. {In addition, cluster $1$ reveals estimated coefficients for $\log(\text{Labor}  )$ and $\log(\text{Capital})$ as $(0.14, 0.86)$, while cluster $2$ displays coefficients as $(0.17, 0.82)$.} These results suggest that the impact of labor and capital on GDP does not significantly differ between advanced and developing countries.
Figure \ref{gdp} depicts the estimated $g_c(u)$ for each cluster, with $c = 1,2$. Specifically, in cluster $1$, the values of $\log$(GDP) appear to be higher compared to those in cluster $2$, while their shapes look similar.

\begin{figure}[] 
\centering
\subfigure[Cluster 1]{
\includegraphics[width=0.4\linewidth, height=0.5\linewidth]{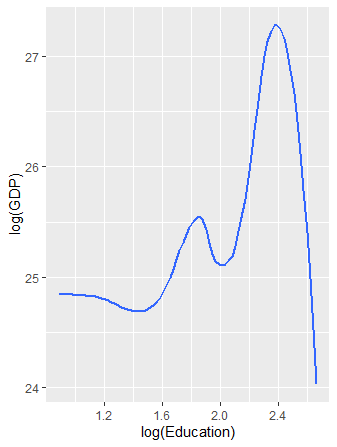}
}
\centering
\subfigure[Cluster 2]{
\includegraphics[width=0.4\linewidth, height=0.5\linewidth]{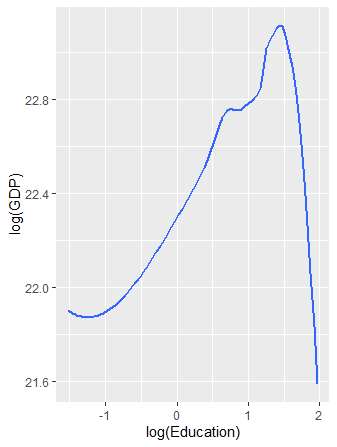}
}
\caption{Estimated $g_c(\cdot)$, $c=1,2$, through MoPLE for the GDP dataset}\label{gdp}
\end{figure}

%%%%%%%%%%%%%%%%%%%%%%%%%%%%%%%%%%%%%%%%%%%%%%%%%%%%%%%%%%%%%%%%%%%%%%%%%%%%%%
\section{Discussion}
%%%%%%%%%%%%%%%%%%%%%%%%%%%%%%%%%%%%%%%%%%%%%%%%%%%%%%%%%%%%%%%%%%%%%%%%%%%%%

In this paper, we propose MoPLE, which applies a partial linear structrure to the expert network of MoE, replacing the linear structure. In numerical studies, MoPLE demonstrates the ability to estimate both parametric and non-parametric components effectively, not only under linear relationships between the response variable and covariates but also under non-linear relationships. Furthermore, it gives comparative performance in terms of the regression clustering. These results imply that MoPLE is a valuable model regardless of whether the data exhibits linear or non-linear relationships, excelling not only in parameter estimation but also in clustering.

While this study assumed univariate covariates for the non-parametric component, it is possible to extend this approach to higher dimensions. Nevertheless, we must acknowledge the curse of dimensionality as a limitation of non-parametric methods. One potential alternative approach is to structure each expert as a partially linear additive model.
Furthermore, although we postulate a specified variable following nonlinear relationships based on the previous work, it is still necessary to construct statistical hypothesis tests for nonlinear relationships, even though it may be challenging due to the presence of a hidden latent structure.

\bibliographystyle{apa} 
\bibliography{references}

\end{document}